\title{Downlink SINR Balancing in C-RAN under Limited Fronthaul Capacity\footnote{The authors are with the
Department of Electrical and Computer Engineering, National
University of Singapore (e-mail:liu\_liang@u.nus.edu; elezhang@nus.edu.sg). R. Zhang is also
with the Institute for Infocomm Research, A*STAR, Singapore.}}
\author{Liang Liu ~\IEEEmembership{Member,~IEEE} and Rui Zhang ~\IEEEmembership{Senior Member,~IEEE}}
\begin{document}
\maketitle \thispagestyle{empty} \vspace{-0.3in}

\begin{abstract}
Cloud radio access network (C-RAN) with centralized baseband processing is envisioned as a promising candidate for the next-generation wireless communication network. However, the joint processing gain of C-RAN is fundamentally constrained by the finite-capacity fronthaul links between the central unit (CU) where joint processing is implemented and distributed access points known as remote radio heads (RRHs). In this paper, we consider the downlink communication in a C-RAN with multi-antenna RRHs and single-antenna users, and investigate the joint RRH beamforming and user-RRH association problem to maximize the minimum signal-to-interference-plus-noise ratio (SINR) of all users subject to each RRH's individual fronthaul capacity constraint. The formulated problem is in general NP-hard due to the fronthaul capacity constraints and thus is difficult to be solved optimally. In this paper, we propose a new iterative method for this problem which decouples the design of beamforming and user association, where the number of users served by each RRH is iteratively reduced until the obtained beamforming and user association solution satisfies the fronthaul capacity constraints of all RRHs. A monotonic convergence is proved for the proposed algorithm, and it is shown by simulation that the algorithm achieves significant performance improvement over other heuristic solutions.
\end{abstract}

\begin{keywords}
Cloud radio access network (C-RAN), fronthaul constraint, beamforming, user association, signal-to-interference-plus-noise ratio (SINR) balancing.
\end{keywords}

\setlength{\baselineskip}{1.3\baselineskip}
\newtheorem{definition}{\underline{Definition}}[section]
\newtheorem{fact}{Fact}
\newtheorem{assumption}{Assumption}
\newtheorem{theorem}{\underline{Theorem}}[section]
\newtheorem{lemma}{\underline{Lemma}}[section]
\newtheorem{corollary}{\underline{Corollary}}[section]
\newtheorem{proposition}{\underline{Proposition}}[section]
\newtheorem{example}{\underline{Example}}[section]
\newtheorem{remark}{\underline{Remark}}[section]
\newtheorem{algorithm}{\underline{Algorithm}}[section]
\newcommand{\mv}[1]{\mbox{\boldmath{$ #1 $}}}

\section{Introduction}\label{sec:Introduction}
With dense deployment of distributed access points known as remote radio heads (RRHs) under the coordination of a central unit (CU), cloud radio access network (C-RAN) has been envisioned as a promising candidate for the fifth-generation (5G) wireless networks in future \cite{ChinaMobile}. Unlike the base station (BS) in the traditional cellular networks which encodes or decodes the user messages locally, in C-RAN each RRH merely forwards the signals of wireless users from/to the CU via a high-speed fronthaul link (fiber or wireless) in the downlink and uplink communications, respectively, while leaving the joint encoding/decoding complexity to a baseband unit (BBU) in the CU. The centralized baseband processing at the CU enables enormous spectrum efficiency and energy efficiency gains for C-RAN over conventional cellular networks.

Despite the theoretical performance gains, the practically achievable throughput of C-RAN is largely constrained by the finite-capacity fronthaul links between the RRHs and the CU. In the literature, a considerable amount of effort has been dedicated to study effective techniques to reduce the fronthaul capacity required in both the uplink and downlink communications in C-RAN. In the uplink communication, the so-called ``quantize-and-forward (QF)'' scheme is proposed to reduce the communication rates between the CU and RRHs, where each RRH samples, quantizes and forwards its received wireless signals to the CU over its fronthaul link with a given capacity \cite{Yu13} --\cite{Liu15}. In the downlink communication, besides the QF scheme \cite{Simeone13}, the CU can more efficiently send the user messages to each RRH directly over its fronthaul link, which then encodes the user messages into wireless signals and transmits them to users \cite{Gesbert11}, \cite{Yu14}. In this scheme, user-RRH association is crucial to the performance of C-RAN since in general the CU can only send the messages for a subset of users to each RRH due to the limited capacity of each fronthaul link.

In this paper, we consider the downlink communication in a C-RAN consisting of multi-antenna RRHs and single-antenna users, where user messages are sent from CU to distributed RRHs via individual fronthaul links for coordinated transmission, as shown in Fig. \ref{fig1}. By jointly designing the beamforming at all RRHs and user-RRH associations, we aim to maximize the minimum signal-to-interference-plus-noise ratio (SINR) of all users subject to each RRH's individual transmit power constraint as well as fronthaul capacity constraint. It is worth noting that without the fronthaul capacity constraints, each user can be served by all the RRHs and the resulted beamforming problem for SINR balancing has been solved in \cite{Eldar06}, \cite{Rui10} by utilizing bisection method jointly with conic optimization techniques \cite{Boyd04}. However, with the newly introduced fronthaul constraints, the joint optimization of beamforming and user association results in a combinatorial problem, which is NP-hard and thus difficult to be optimally solved in a network with large number of users and RRHs. In this paper, we propose a new method for practically solving this problem, which effectively decouples the design of RRH beamforming and user-RRH association, thus achieving significant complexity reduction. Specifically, we first associate each user to all RRHs, and then iteratively reduce the number of users served by each RRH until the corresponding optimal beamforming solution given this user association solution satisfies all the RRHs' fronthaul capacity constraints. A monotonic convergence is proved for the proposed iterative algorithm, and numerical results show that its performance is significantly better as compared to other heuristic solutions, especially when the fronthaul capacity is more stringent.

\begin{figure}
\begin{center}
\scalebox{0.5}{\includegraphics*{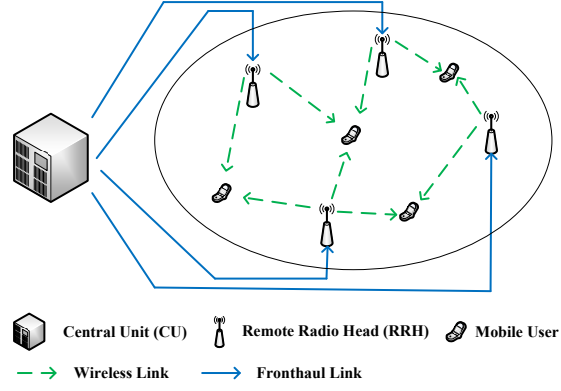}}
\end{center}
\caption{System model of the downlink communication in C-RAN.}\label{fig1}
\end{figure}

\section{System Model}\label{sec:System Model}
This paper studies the downlink communication in C-RAN, as shown in Fig. \ref{fig1}. The studied system consists of one CU, $N$ RRHs, denoted by the set $\mathcal{N}=\{1,\cdots,N\}$, and $K$ users, denoted by $\mathcal{K}=\{1,\cdots,K\}$. It is assumed that each RRH is equipped with $M\geq 1$ antennas, while each user is equipped with one single antenna. It is further assumed that each RRH $n$ is connected to the CU via a fronthaul link with a capacity of $\bar{T}_n$ bits per second (bps). In the downlink, the CU sends the user messages and corresponding quantized beamforming vectors to each RRH via its fronthaul link. Then, each RRH upconverts the digital messages into wireless signals and sends them to the users. The details are given as follows.

It is assumed that the $N$ RRHs communicate with the $K$ users over quasi-static flat-fading channels over a given bandwidth of $B$ Hz. The equivalent baseband transmit signal of RRH $n$ is
\begin{align}\label{eqn:transmit signal scheme 1}
\mv{x}_n=\sum\limits_{k=1}^K\mv{w}_{k,n}s_k, ~~~ n=1,\cdots,N,
\end{align}where $s_k\sim \mathcal{CN}(0,1)$ denotes the message intended for user $k$, which is modeled as a circularly symmetric complex Gaussian (CSCG) random variable with zero-mean and unit-variance, and $\mv{w}_{k,n}\in \mathbb{C}^{M\times 1}$ denotes RRH $n$'s beamforming vector for user $k$. Suppose that RRH $n$ has a transmit sum-power constraint $\bar{P}_n$; from (\ref{eqn:transmit signal scheme 1}), we thus have $E[\mv{x}_n\mv{x}_n^H]=\sum_{k=1}^K\|\mv{w}_{k,n}\|^2\leq \bar{P}_n$, $\forall n$.

Then, the received signal of user $k$ can be expressed as
\begin{align}\label{eqn:received signal scheme 1}
y_k=\sum\limits_{n=1}^N\mv{h}_{k,n}^H\mv{x}_n+z_k=\sum\limits_{n=1}^N\mv{h}_{k,n}^H\mv{w}_{k,n}s_k+\sum\limits_{n=1}^N\mv{h}_{k,n}^H\sum\limits_{j\neq k}\mv{w}_{j,n}s_j+z_k, ~~~ k=1,\cdots,K,
\end{align}where $\mv{h}_{k,n}\in \mathbb{C}^{M\times 1}$ denotes the channel from RRH $n$ to user $k$, and $z_k\sim \mathcal{CN}(0,\sigma^2)$ denotes the additive white Gaussian noise (AWGN) at user $k$. In this paper, it is assumed that the channels to all the $K$ users are perfectly known at the CU.


The decoding SINR for user $k$ is thus expressed as
\begin{align}\label{eqn:SINR scheme 1}
\gamma_k=\frac{\left|\sum\limits_{n=1}^N\mv{h}_{k,n}^H\mv{w}_{k,n}\right|^2}{\sum\limits_{j\neq k}\left|\sum\limits_{n=1}^N\mv{h}_{k,n}^H\mv{w}_{j,n}\right|^2+\sigma^2}, ~~~ k=1,\cdots,K.
\end{align}Then, the achievable rate of user $k$ in bps is given by
\begin{align}
R_k=B\log_2(1+\gamma_k), ~~~ k=1,\cdots,K.
\end{align}

Next, consider the data transmission from the CU to RRHs over their fronthaul links. It is worth noting that if $\mv{w}_{k,n}\neq \mv{0}$, then user $k$ is served by RRH $n$; otherwise, user $k$ is not served by RRH $n$. As a result, we can define the user association indicator function $\alpha_{k,n}(\mv{w}_{k,n})$ as follows:
\begin{align}\label{eqn:user association}
\alpha_{k,n}(\mv{w}_{k,n})=\left\{\begin{array}{ll}1, & {\rm if} ~ \mv{w}_{k,n}\neq \mv{0}, \\ 0, & {\rm otherwise}, \end{array} \right. ~~~ k=1,\cdots,K, ~ n=1,\cdots,N.
\end{align}If user $k$ is served by RRH $n$, i.e., $\alpha_{k,n}(\mv{w}_{k,n})=1$, the CU needs to send the digital messages $s_k$ to RRH $n$ over its fronthaul link at a rate of $R_k$, i.e., $B\log_2(1+\gamma_k)$ bps; otherwise, the CU does not need to send $s_k$ to RRH $n$. As a result, the fronthaul capacity constraint of RRH $n$ can be expressed as\footnote{Although the beamforming vectors in general change with wireless channels, since the channel coherence time is assumed to be much longer than the data symbol duration, the number of bits used to send the quantized beamforming vectors is much smaller than that to send the user messages over fronthaul links. As a result, in this paper we assume that each RRH's fronthaul capacity is mainly consumed by the user messages and the quantization for the beamforming vectors sent to each RRH is perfect.}
\begin{align}
\sum\limits_{k=1}^KB\alpha_{k,n}(\mv{w}_{k,n})\log_2(1+\gamma_k)\leq \bar{T}_n, ~~~ n=1,\cdots,N.
\end{align}

\section{Problem Formulation}\label{sec:Problem Formulation}
In this paper, we aim to design the beamforming vectors at all RRHs, i.e., $\mv{w}_{k,n}$'s, which also indicate the user association solution, i.e., $\alpha_{k,n}(\mv{w}_{k,n})$'s according to (\ref{eqn:user association}), to maximize the minimum SINR of all the users subject to each RRH's transmit power constraint as well as its fronthaul capacity constraint. Specifically, we consider the following SINR balancing problem.
\begin{align}\mathrm{(P1)}:~\mathop{\mathtt{Max}}_{\{\mv{w}_{k,n}\},\gamma} & ~~~ \gamma \nonumber  \\
\mathtt {S.t.} ~~ & ~~~ \gamma_k\geq \gamma, ~~~ \forall k, \label{eqn:constraint 1} \\ & ~~~ \sum_{k=1}^K\|\mv{w}_{k,n}\|^2\leq \bar{P}_n, ~~~ \forall n,  \label{eqn:constraint 2} \\ & ~~~  \sum\limits_{k=1}^KB\alpha_{k,n}(\mv{w}_{k,n})\log_2(1+\gamma_k)\leq \bar{T}_n, ~~~ \forall n.  \label{eqn:constraint 3}
\end{align}It is worth noting that without the fronthaul capacity constraints given in (\ref{eqn:constraint 3}), each user should be served by all the RRHs, i.e., $\alpha_{k,n}(\mv{w}_{k,n})=1$, $\forall k,n$, and the optimal beamforming solution can be efficiently obtained \cite{Eldar06}, \cite{Rui10}. Specifically, given any SINR target $\bar{\gamma}$, we need to check whether there exists at least one beamforming solution to satisfy constraints (\ref{eqn:constraint 1}) and (\ref{eqn:constraint 2}). This feasibility problem can be transformed into a second-order cone programming (SOCP) that can be efficiently solved \cite{Boyd04}. Then based on whether such a feasible beamforming solution can be found or not with each given $\bar{\gamma}$, the bisection method \cite{Boyd04} can be applied to let $\bar{\gamma}$ iteratively converge to the optimal SINR solution $\gamma$ to problem (P1).

However, with the newly introduced fronthaul capacity constraints given in (\ref{eqn:constraint 3}), in general each RRH cannot support all the users in the downlink transmission, and as a result, from (\ref{eqn:user association}), for each RRH $n$, only a subset of users are associated with it, for which the corresponding user association function $\alpha_{k,n}(\mv{w}_{k,n})$ and beamforming vector $\mv{w}_{k,n}$ are non-zero. Therefore, the RRH beamforming and user-RRH association need to be jointly optimized in (P1). However, due to the coupling between beamforming and user association given in (\ref{eqn:user association}), problem (P1) is NP-hard and thus difficult to be optimally solved by exhaustive search of all possible user-RRH associations (which has the complexity of $O(2^{NK})$), especially in a network with large number of users and/or RRHs. To the best knowledge of the authors, problem (P1) has not been efficiently solved in the literature.
\section{Proposed Solution}\label{sec:Proposed Solution}
In this section, we propose a new algorithm to efficiently solve problem (P1), which obtains a suboptimal solution in general. First, we have the following proposition.
\begin{proposition}\label{proposition1}
There always exists an optimal solution to problem (P1) such that all the users achieve the same SINR. As a result, problem (P1) is equivalent to the following problem.
\begin{align}\mathrm{(P2)}:~\mathop{\mathtt{Max}}_{\{\mv{w}_{k,n}\},\gamma} & ~~~ \gamma \nonumber  \\
\mathtt {S.t.} ~~ & ~~~ (\ref{eqn:constraint 1}), ~ (\ref{eqn:constraint 2}), \nonumber \\ & ~~~  \sum\limits_{k=1}^KB\alpha_{k,n}(\mv{w}_{k,n})\log_2(1+\gamma)\leq \bar{T}_n, ~~~ \forall n.  \label{eqn:constraint 5}
\end{align}
\end{proposition}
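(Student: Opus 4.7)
The plan is to show that any optimal solution of (P1) can be rescaled into another optimal solution in which every user attains the common minimum SINR $\gamma^*$; since at such a solution the left-hand side of constraint (\ref{eqn:constraint 3}) agrees exactly with that of (\ref{eqn:constraint 5}), the equivalence between (P1) and (P2) follows at once. (The easy direction, that every (P1)-feasible solution is (P2)-feasible with the same objective---since $\gamma_k\geq\gamma$ makes (\ref{eqn:constraint 3}) a tightening of (\ref{eqn:constraint 5})---gives $\mathrm{opt}(\mathrm{P1})\leq\mathrm{opt}(\mathrm{P2})$ for free.)

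Let $(\mv{w}_{k,n}^*)$ be any optimal solution of (P1), with optimal value $\gamma^*>0$ and induced per-user SINRs $\gamma_k^*\geq\gamma^*$. I consider the rescaled beamformers $\tilde{\mv{w}}_{k,n}=\sqrt{p_k}\,\mv{w}_{k,n}^*$ with scalars $p_k\in(0,1]$ to be determined. Writing $A_k=|\sum_n\mv{h}_{k,n}^H\mv{w}_{k,n}^*|^2$ and $B_{k,j}=|\sum_n\mv{h}_{k,n}^H\mv{w}_{j,n}^*|^2$, the rescaled SINR becomes $\tilde\gamma_k=p_kA_k/(\sum_{j\neq k}p_jB_{k,j}+\sigma^2)$. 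Demanding $\tilde\gamma_k=\gamma^*$ for every $k$ is equivalent to the linear system $M\mathbf{p}=\gamma^*\sigma^2\mathbf{1}$, where $M$ has diagonal entries $A_k$ and off-diagonal entries $-\gamma^*B_{k,j}$.

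The crucial analytic step is to show this system has a solution with every $p_k$ lying in $(0,1]$. I would establish this through a standard $M$-matrix argument. The direct computation $(M\mathbf{1})_k=A_k-\gamma^*\sum_{j\neq k}B_{k,j}=(\gamma_k^*-\gamma^*)\sum_{j\neq k}B_{k,j}+\gamma_k^*\sigma^2\geq\gamma^*\sigma^2$ shows that $M\mathbf{1}\geq\gamma^*\sigma^2\mathbf{1}$ componentwise. Combined with the $Z$-matrix sign pattern of $M$ (positive diagonal, non-positive off-diagonal), this makes $M$ a non-singular $M$-matrix, so $M^{-1}$ exists and is entry-wise non-negative. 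Setting $\mathbf{p}=\gamma^*\sigma^2M^{-1}\mathbf{1}$ gives $\mathbf{p}\geq\mathbf{0}$; the identity $M(\mathbf{1}-\mathbf{p})=M\mathbf{1}-\gamma^*\sigma^2\mathbf{1}\geq\mathbf{0}$ together with $M^{-1}\geq 0$ then forces $\mathbf{p}\leq\mathbf{1}$; and $p_k>0$ because $\tilde\gamma_k=\gamma^*>0$ rules out $p_k=0$.

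Finally I would verify that $\tilde{\mv{w}}_{k,n}$ is feasible for (P1) with the same objective $\gamma^*$: the per-RRH power $\sum_k p_k\|\mv{w}_{k,n}^*\|^2$ is bounded by the original power and hence by $\bar P_n$; the user-RRH associations $\alpha_{k,n}$ are unchanged since $\sqrt{p_k}>0$; and the fronthaul rate $\sum_k B\alpha_{k,n}\log_2(1+\gamma^*)$ is upper bounded by the original rate $\sum_k B\alpha_{k,n}\log_2(1+\gamma_k^*)\leq\bar T_n$. The rescaled solution is therefore an optimal solution of (P1) in which every user attains exactly $\gamma^*$, so constraint (\ref{eqn:constraint 3}) collapses to (\ref{eqn:constraint 5}) and the equivalence with (P2) follows. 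The main obstacle is the $M$-matrix step---guaranteeing $\mathbf{p}\in(0,1]^K$---while everything else amounts to bookkeeping about how simultaneous scaling of all beamformers affects each constraint.
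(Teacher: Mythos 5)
Your proof is correct, and while it shares the paper's high-level strategy---replace the optimal solution by one in which every user attains the common minimum SINR $\gamma^*$, at which point constraint (\ref{eqn:constraint 3}) coincides with (\ref{eqn:constraint 5})---it executes the key step quite differently. The paper simply \emph{asserts} that, because $\gamma''=\min_k\gamma_k'$ is dominated by the achieved SINR vector, some beamformer $\{\mv{w}_{k,n}''\}$ achieving $\gamma''$ for all users must exist (implicitly invoking the SINR-balancing feasibility results of \cite{Eldar06}, \cite{Rui10}) and that it ``similarly'' satisfies the fronthaul constraints. That last claim is actually delicate: a generic re-solved beamformer could have a different sparsity pattern, activating new user--RRH links and changing $\alpha_{k,n}$, in which case the fronthaul bound would not follow. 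Your construction closes exactly this hole: by scaling each user's beamformers by a common factor $\sqrt{p_k}$ with $p_k\in(0,1]$, you preserve the support of every $\mv{w}_{k,n}$ (so $\alpha_{k,n}$ is unchanged), keep the per-RRH power below its original value, and make the fronthaul left-hand side only decrease. The $M$-matrix argument---$M$ is a $Z$-matrix with $M\mv{1}\geq\gamma^*\sigma^2\mv{1}>0$, hence a nonsingular $M$-matrix with $M^{-1}\geq 0$, yielding $\mv{p}\in(0,1]^K$---is the standard and correct way to certify that the required scaling exists. The only cosmetic omission is that full equivalence of (P1) and (P2) also needs the reverse direction ($\mathrm{opt}(\mathrm{P2})\leq\mathrm{opt}(\mathrm{P1})$), but your rescaling lemma applies verbatim to any (P2)-feasible point and delivers it immediately; the paper is no more explicit on this point. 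In short, your argument is a more self-contained and rigorous version of the paper's proof.
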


\begin{proof}
Please refer to Appendix \ref{appendix1}.
\end{proof}

Notice that in (P2), comparing the constraint in (\ref{eqn:constraint 5}) with that given by (\ref{eqn:constraint 3}) in (P1), individual user SINR $\gamma_k$ is replaced by the user common SINR $\gamma$ without loss of optimality. As explained in Section \ref{sec:Problem Formulation}, the main difficulty to solve (P2) lies in the coupled RRH beamforming design and user-RRH association through (\ref{eqn:user association}). Hence, in the following, we propose a new method to solve problem (P2) by decoupling the designs of user association and beamforming. In brief, we first allow each user to be served by all the RRHs, i.e., $\alpha_{k,n}(\mv{w}_{k,n})=1$, $\forall k,n$, and then iteratively remove an active user-RRH association with a selected user and RRH pair until the fronthaul capacity constraints of all RRHs are satisfied in (\ref{eqn:constraint 5}).

Specifically, the proposed algorithm iterates as follows until convergence. In the $t$th iteration, with $t\geq 1$, define $\Omega_n^{(t)}$ as the set of users served by RRH $n$. At start, we have $\Omega_n^{(1)}=\mathcal{K}$, $\forall n$. In other words, each user is served by all the RRHs initially. Next, we solve problem (P2) at the $t$th iteration with a given user association solution $\alpha_{k,n}(\mv{w}_{k,n})=0$ if $k\notin \Omega_n^{(t)}$, and $\alpha_{k,n}(\mv{w}_{k,n})=1$ if $k\in \Omega_n^{(t)}$, by solving the following problem.
\begin{align}\mathrm{(P2,t)}:~\mathop{\mathtt{Max}}_{\{\mv{w}_{k,n}\},\gamma} & ~~~ \gamma \nonumber  \\
\mathtt {S.t.} ~~ & ~~~ (\ref{eqn:constraint 1}), ~ (\ref{eqn:constraint 2}), \nonumber \\ & ~~~  B|\Omega_n^{(t)}|\log_2(1+\gamma)\leq \bar{T}_n, ~~~ \forall n, \label{eqn:constraint 7} \\ & ~~~ \|\mv{w}_{k,n}\|^2=0, ~~~ \forall k\notin \Omega_n^{(t)},  \label{eqn:constraint 6}
\end{align}where $|A|$ denotes the cardinality of set $A$. For convenience, we denote problem (P2,t) without fronthaul constraints (\ref{eqn:constraint 7}) as problem (P2-1,t). This problem can be efficiently solved similarly to problem (P1) without the fronthaul constraints (\ref{eqn:constraint 3}) as in \cite{Eldar06} and \cite{Rui10}. Let $\mv{w}_{k,n}^{(1,t)}$'s and $\gamma^{(1,t)}$ denote the optimal solution and optimal value of problem (P2-1,t), respectively.

Then, consider problem (P2,t) only with the fronthaul constraints (\ref{eqn:constraint 7}), which is denoted as problem (P2-2,t). The optimal value of this problem can be easily obtained as
\begin{align}\label{eqn:opt 1}
\gamma^{(2,t)}=\min\limits_{1\leq n \leq N} 2^{\frac{\bar{T}_n}{B|\Omega_n^{(t)}|}}-1.
\end{align}

To summarize, $\gamma^{(1,t)}$ and $\gamma^{(2,t)}$ denote the maximum achievable minimum (max-min) user SINR by only considering the wireless and fronthaul links, respectively, with given user association $\Omega_n^{(t)}$'s. Then, the optimal value of problem (P2,t) is obtained in the following proposition.
\begin{proposition}\label{proposition2}
The optimal value of problem (P2,t) is
\begin{align}\label{eqn:opt 2}
\gamma^{(t)}=\min (\gamma^{(1,t)},\gamma^{(2,t)}).
\end{align}
\end{proposition}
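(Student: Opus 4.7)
The plan is to prove equality by separately establishing $\gamma^{(t)}\leq\min(\gamma^{(1,t)},\gamma^{(2,t)})$ and $\gamma^{(t)}\geq\min(\gamma^{(1,t)},\gamma^{(2,t)})$, which amounts to a standard intersection-of-feasible-sets argument exploiting the fact that the wireless-link constraints (\ref{eqn:constraint 1}), (\ref{eqn:constraint 2}), (\ref{eqn:constraint 6}) and the fronthaul-link constraint (\ref{eqn:constraint 7}) involve disjoint structural roles: the former constrain the beamformers $\{\mv{w}_{k,n}\}$ given $\gamma$, whereas the latter constrains only the scalar $\gamma$.

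For the upper bound, I would observe that problem (P2,t) is obtained from (P2-1,t) by adding the constraint (\ref{eqn:constraint 7}), and from (P2-2,t) by adding the constraints (\ref{eqn:constraint 1}), (\ref{eqn:constraint 2}), (\ref{eqn:constraint 6}). Consequently the feasible region of (P2,t) is contained in those of both (P2-1,t) and (P2-2,t), so the optimal value of (P2,t) cannot exceed either $\gamma^{(1,t)}$ or $\gamma^{(2,t)}$, hence not their minimum.

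For the lower bound, I would exhibit a feasible point of (P2,t) attaining $\min(\gamma^{(1,t)},\gamma^{(2,t)})$. Take the optimal beamformers $\{\mv{w}_{k,n}^{(1,t)}\}$ of (P2-1,t), which already satisfy (\ref{eqn:constraint 2}) and (\ref{eqn:constraint 6}) and deliver per-user SINR at least $\gamma^{(1,t)}$. Set $\gamma=\min(\gamma^{(1,t)},\gamma^{(2,t)})$. Because the SINR constraint (\ref{eqn:constraint 1}) is monotone, it still holds at this smaller target; and because $\gamma\leq\gamma^{(2,t)}=\min_n 2^{\bar{T}_n/(B|\Omega_n^{(t)}|)}-1$, the fronthaul constraint (\ref{eqn:constraint 7}) is met for every $n$. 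Hence $(\{\mv{w}_{k,n}^{(1,t)}\},\gamma)$ is feasible for (P2,t), proving the reverse inequality.

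The main thing to be careful about is the lower-bound direction: one must be sure that the optimal beamformers of (P2-1,t), designed without any fronthaul awareness, are still usable when a strictly smaller common SINR target $\gamma^{(2,t)}<\gamma^{(1,t)}$ is imposed. This is precisely where the one-sided (``$\geq$'') nature of constraint (\ref{eqn:constraint 1}) is crucial, since it means lowering $\gamma$ only relaxes the wireless-side feasibility. No rescaling or perturbation of the beamformers is needed, which is what makes the decomposition clean.
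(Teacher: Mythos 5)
Your proof is correct and follows essentially the same route as the paper: the upper bound via the observation that the feasible set of (P2,t) is contained in those of both subproblems, and the lower bound by exhibiting a feasible point achieving $\min(\gamma^{(1,t)},\gamma^{(2,t)})$. The only (harmless) difference is that in the case $\gamma^{(2,t)}<\gamma^{(1,t)}$ the paper invokes a separate beamforming solution $\mv{w}_{k,n}^{(2,t)}$ that achieves exactly $\gamma^{(2,t)}$ for all users, whereas you simply reuse $\mv{w}_{k,n}^{(1,t)}$ with the lowered declared target $\gamma=\gamma^{(2,t)}$, which is valid because constraint (\ref{eqn:constraint 1}) is one-sided and constraint (\ref{eqn:constraint 7}) depends only on the variable $\gamma$, not on the realized SINRs.
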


\begin{proof}
Please refer to Appendix \ref{appendix2}.
\end{proof}

In the $t$th iteration of the proposed algorithm, after problem (P2,t) is solved based on Proposition \ref{proposition2}, we check whether the optimal max-min SINR of problem (P2-1,t) satisfies the fronthaul constraints (\ref{eqn:constraint 7}). If so, i.e., $\gamma^{(1,t)}\leq \gamma^{(2,t)}$, the algorithm terminates. Otherwise, if $\gamma^{(1,t)}>\gamma^{(2,t)}$, we will shut down one more wireless link and remove its corresponding fronthaul for a given user and RRH pair, denoted by $(k^{(t)},n^{(t)})$, and then solve problem (P2) with the updated user association solution:
\begin{align}\label{eqn:user association new}
\Omega_n^{(t+1)}=\left\{\begin{array}{ll}\Omega_n^{(t)} \backslash \{k^{(t)}\}, & {\rm if} ~ n= n^{(t)}, \\ \Omega_n^{(t)}, & {\rm otherwise},\end{array}\right.
\end{align}in the $(t+1)$-th iteration, where $A\backslash B$ denotes the set $\{x|x\in A ~ {\rm and} ~ x\notin B\}$. The following proposition ensures the convergence of the above algorithm.

\begin{proposition}\label{proposition3}
The stopping criterion $\gamma^{(1,t)}\leq \gamma^{(2,t)}$ is guaranteed to be satisfied at some iteration, denoted by iteration $t^\ast$. Moreover, before the stopping criterion is satisfied, the obtained max-min SINR of problem (P2,t) is non-decreasing after each iteration, i.e., $\gamma^{(t)}\geq \gamma^{(t-1)}$, $\forall t\leq t^\ast-1$.
\end{proposition}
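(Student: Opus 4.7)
The plan is to reduce Proposition \ref{proposition3} to two elementary monotonicity facts about the sequences $\{\gamma^{(1,t)}\}$ and $\{\gamma^{(2,t)}\}$ induced by the update rule (\ref{eqn:user association new}), and then to read off both termination and the non-decrease of $\gamma^{(t)}$ from these facts together with Proposition \ref{proposition2}. For any iteration $t$ that has not yet triggered the stopping criterion, I would establish $\gamma^{(1,t+1)}\leq \gamma^{(1,t)}$ and $\gamma^{(2,t+1)}\geq \gamma^{(2,t)}$. The first is immediate: problem (P2-1,$t{+}1$) differs from (P2-1,$t$) only through the extra zero-beamforming constraint $\|\mv{w}_{k^{(t)},n^{(t)}}\|^{2}=0$ that (\ref{eqn:constraint 6}) imposes under the updated association (\ref{eqn:user association new}), so its feasible set is contained in that of (P2-1,$t$) and the SINR-balancing maximum can only weakly decrease. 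The second follows directly from the closed form (\ref{eqn:opt 1}): the update yields $|\Omega_n^{(t+1)}|\leq |\Omega_n^{(t)}|$ for every $n$, and the map $m\mapsto 2^{\bar{T}_n/(Bm)}-1$ is strictly decreasing in the positive integer $m$, so each term inside the outer minimum in (\ref{eqn:opt 1}) is non-decreasing in $t$.

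Next I would deduce termination. Each iteration removes exactly one element from $\bigcup_n \{n\}\times\Omega_n^{(t)}$, so the non-negative integer $\sum_{n=1}^{N}|\Omega_n^{(t)}|$ decreases by exactly one per iteration while being bounded below by zero; the algorithm therefore cannot continue beyond $NK$ iterations without making every $\Omega_n^{(t)}$ empty. At that extreme every beamformer is forced to zero via (\ref{eqn:constraint 6}), giving $\gamma^{(1,t)}=0$, while every fronthaul constraint (\ref{eqn:constraint 7}) collapses to $0\leq\bar{T}_n$ and is vacuous, so the right-hand side of (\ref{eqn:opt 1}) equals $+\infty$ under the natural convention that an RRH serving no users imposes no fronthaul load. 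Hence $\gamma^{(1,t)}\leq \gamma^{(2,t)}$ must hold no later than this step, which proves the existence of a finite stopping iteration $t^{\ast}$.

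Finally I would derive the non-decrease of $\gamma^{(t)}$. For any $t\leq t^{\ast}-1$ the stopping criterion was not triggered at iteration $t$, so by definition $\gamma^{(1,t)}>\gamma^{(2,t)}$, and Proposition \ref{proposition2} gives $\gamma^{(t)}=\min(\gamma^{(1,t)},\gamma^{(2,t)})=\gamma^{(2,t)}$; applying the same reasoning to $t-1$, which also precedes termination, yields $\gamma^{(t-1)}=\gamma^{(2,t-1)}$, and combining with the second monotonicity fact above delivers $\gamma^{(t)}=\gamma^{(2,t)}\geq \gamma^{(2,t-1)}=\gamma^{(t-1)}$. I do not expect any conceptually hard step here; the only subtle point is handling the boundary iterations at which some $|\Omega_n^{(t)}|=0$ and the corresponding term in (\ref{eqn:opt 1}) is not literally defined. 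I would dispose of this once and for all by excluding such indices from the minimum, which is consistent with the vacuous form of (\ref{eqn:constraint 7}) at an empty RRH and preserves both monotonicity facts above.
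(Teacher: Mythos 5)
Your proposal is correct and follows essentially the same route as the paper's proof: establish that $\gamma^{(1,t)}$ is non-increasing (shrinking feasible set) and $\gamma^{(2,t)}$ is non-decreasing (fewer users per RRH in (\ref{eqn:opt 1})), conclude termination from the fact that the criterion holds trivially once all links are removed, and obtain $\gamma^{(t)}=\gamma^{(2,t)}\geq\gamma^{(2,t-1)}=\gamma^{(t-1)}$ via Proposition \ref{proposition2} before the stopping criterion is met. Your treatment of the empty-$\Omega_n^{(t)}$ boundary case is slightly more careful than the paper's, but it is the same argument.
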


\begin{proof}
Please refer to Appendix \ref{appendix3}.
\end{proof}

It is worth noting that Proposition \ref{proposition3} only shows that $\gamma^{(t)}\geq \gamma^{(t-1)}$, $\forall t\leq t^\ast-1$. However, in general it is unknown which one between $\gamma^{(t^\ast)}=\gamma^{(1,t^\ast)}$ and $\gamma^{(t^\ast-1)}=\gamma^{(2,t^\ast-1)}$ is greater. As a result, after the algorithm converges, we need to compare the values of $\gamma^{(t^\ast)}$ and $\gamma^{(t^\ast-1)}$ to select the larger one as the converged max-min SINR. Notice that the maximum number of iterations for the above algorithm to converge is $NK$.

The only remaining problem for the proposed algorithm is how to select a wireless link to shut down in each iteration, i.e., how to update $\Omega_n^{(t+1)}$'s from $\Omega_n^{(t)}$'s. According to Proposition \ref{proposition2}, to converge to a max-min SINR as large as possible, it is wise to shut down wireless links in a way such that $\gamma^{(1,t)}$ decreases slowly with $t$, but $\gamma^{(2,t)}$ increases rapidly with $t$. In the following, we propose a practical method to update $\Omega_n^{(t)}$'s based on the above idea.

It can be observed from (\ref{eqn:opt 1}) that to increase the value of $\gamma^{(2,t)}$ in each iteration, one wireless link associated with the RRH with the smallest value of $\bar{T}_n/|\Omega_n^{(t)}|$ should be shut down. Define the set of RRHs with the smallest value of $\bar{T}_n/|\Omega_n^{(t)}|$ as follows:
\begin{align}\label{eqn:rrh 1}
\Psi^{(t)}=\left\{n\left| \frac{\bar{T}_n}{|\Omega_n^{(t)}|}=\min\limits_{1\leq \bar{n} \leq N} \frac{\bar{T}_{\bar{n}}}{|\Omega_{\bar{n}}^{(t)}|}, n=1,\cdots,N.\right.\right\}.
\end{align}Furthermore, define the active wireless links supported by all the RRHs $n\in \Psi^{(t)}$ as follows:
\begin{align}\label{eqn:link 1}
\Phi^{(t)}=\left\{(k,n)\left| n\in \Psi^{(t)}, ~ k\in \Omega_n^{(t)} \right.\right\}.
\end{align}To make $\gamma^{(1,t)}$ decrease slowly in each iteration, we need to solve problem (P2-1,t) $|\Phi^{(t)}|$ times, each time with the additional constraint that one wireless link in $\Phi^{(t)}$ is shut down, and then remove the one that causes the minimum max-min SINR reduction, which is however of high complexity for implementation. In this paper, we propose a low-complexity scheme that shuts down one wireless link based on the existing beamforming solution to problem (P2-1,t), i.e., $\mv{w}_{k,n}^{(1,t)}$'s. Note that with $\mv{w}_{k,n}^{(1,t)}$'s, all users achieve the same SINR. As a result, if one wireless link $(\bar{k},\bar{n})\in \Phi^{(t)}$ is removed, i.e., $\mv{w}_{k,n}=\mv{0}$ if $(k,n)=(\bar{k},\bar{n})$, and $\mv{w}_{k,n}=\mv{w}_{k,n}^{(1,t)}$ otherwise, user $\bar{k}$'s SINR becomes the minimum among the SINRs of all users, which should be maximized. Therefore, we select one wireless link in the set $\Phi^{(t)}$ to shut down based on the following criterion:
\begin{align}\label{eqn:SNR}
(k^{(t)},n^{(t)})=\arg \max \limits_{(\bar{k},\bar{n})\in \Phi^{(t)}} \frac{\sum\limits_{n\neq \bar{n}}\left|\mv{h}_{\bar{k},n}^H \mv{w}_{\bar{k},n}^{(1,t)}\right|^2}{\sum\limits_{j\neq \bar{k}}\left|\sum\limits_{n=1}^N\mv{h}_{\bar{k},n}^H\mv{w}_{j,n}^{(1,t)}\right|^2+\sigma^2}.
\end{align}

\begin{table}[ht]
\begin{center}
\caption{\textbf{Algorithm \ref{table1}}: Proposed Solution for Problem (P2)} \vspace{0.2cm}
 \hrule
\vspace{0.3cm}
\begin{itemize}
\item[a)] Initialize: Set $\Omega_n^{(1)}=\mathcal{K}$, $\forall n$, and $t=1$;
\item[b)] {\bf Repeat}
\begin{itemize}
\item[1)] Solve problem (P2-1,t), obtain the optimal value $\gamma^{(1,t)}$ and optimal beamforming solution $\mv{w}_{k,n}^{(1,t)}$'s;
\item[2)] Solve problem (P2-2,t), obtain the optimal value $\gamma^{(2,t)}$ by (\ref{eqn:opt 1}). If $\gamma^{(1,t)}\geq \gamma^{(2,t)}$, obtain one beamforming solution $\mv{w}_{k,n}^{(2,t)}$'s to achieve $\gamma^{(2,t)}$;
\item[3)] Set the optimal value of problem (P2,t), i.e., $\gamma^{(t)}$, according to (\ref{eqn:opt 2});
\item[4)] If $\gamma^{(1,t)}\leq\gamma^{(2,t)}$, terminate the algorithm; otherwise, update $\Omega_n^{(t+1)}$'s according to (\ref{eqn:user association new}), (\ref{eqn:SNR}) and $t=t+1$;
\end{itemize}
\item[c)] Set the optimal value of problem (P2) as $\gamma^\ast=\max(\gamma^{(t)},\gamma^{(t-1)})$. If $\gamma^\ast=\gamma^{(t)}$, set the optimal beamforming solution as $\mv{w}_{k,n}^\ast=\mv{w}_{k,n}^{(1,t)}$'s; otherwise, set $\mv{w}_{k,n}^\ast=\mv{w}_{k,n}^{(2,t-1)}$'s. Set the optimal user association according to (\ref{eqn:user association}) with given $\mv{w}_{k,n}^\ast$'s.
\end{itemize}
\vspace{0.2cm} \hrule \label{table1} \end{center}
\end{table}

The overall algorithm proposed for problem (P2), denoted by Algorithm \ref{table1}, is summarized in Table \ref{table1}.


\section{Numerical Results}\label{sec:Numerical Results}
In this section, we provide one numerical example to verify our results. In this example, there are $N=5$ RRHs, each equipped with $M=5$ antennas, and $K=15$ users randomly distributed in a circle area of radius $500$m. The bandwidth of the wireless link is $B=10$MHz. The channel vectors are generated from independent Rayleigh fading, while the path loss model of the wireless channel is given as $30.6+36.7\log_{10}(d)$ in dB, where $d$ (in meter) denotes the distance between the user and the RRH. The transmit power constraint for each RRH is $\bar{P}_n=30$dBm, $\forall n$. The power spectral density of the AWGN at each user receiver is assumed to be $-169$dBm/Hz, and the noise figure due to the receiver processing is $7$dB. Moreover, we assume that all the RRHs possess the same fronthaul capacity, i.e., $\bar{T}_n=T$, $\forall n$.

Besides Algorithm \ref{table1}, we also consider the following benchmark schemes for performance comparison.
\begin{itemize}
\item{\bf Benchmark Scheme 1: Minimum Interference Leakage based User Association.} In this scheme, we still perform Algorithm \ref{table1} to solve problem (P2). However, the user selection criterion given in (\ref{eqn:SNR}) is changed to the following:
\begin{align}\label{eqn:SNR1}
(k^{(t)},n^{(t)})=\arg \max \limits_{(k,n)\in \Phi^{(t)}} \sum\limits_{j\neq k} |\mv{h}_{j,n}^H\mv{w}_{k,n}^{(1,t)}|^2.
\end{align}In other words, we shut down the wireless link in $\Phi^{(t)}$ which generates the most significant interference to other users at the $t$th iteration.

\item{\bf Benchmark Scheme 2: Channel-based Greedy User Association.} In this scheme, we first associate each user to its nearest RRH. Then, we iteratively activate one wireless link with the strongest channel power among all inactive wireless links and obtain the optimal beamforming solution to problem (P2) with the updated user association based on Proposition \ref{proposition2}. This procedure is iterated until the obtained max-min SINR is reduced compared to previous iteration.

\item{\bf Benchmark Scheme 3: Conventional Cellular Network.} In this scheme, we consider the conventional cellular network where each user is served by its nearest RRH. With this user association solution, the optimal beamforming solution can be obtained by solving problem (P2-1,t) as in \cite{Eldar06} and \cite{Rui10}.

\end{itemize}


\begin{figure}
\begin{center}
\scalebox{0.5}{\includegraphics*{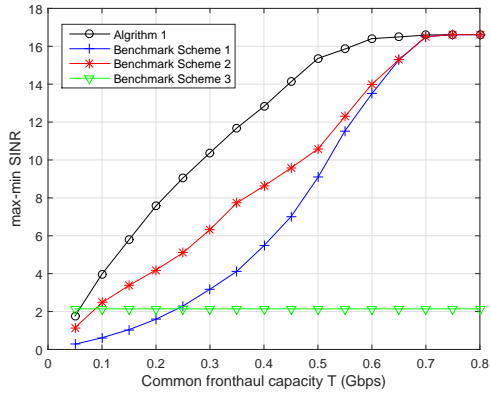}}
\end{center}
\caption{Performance comparison for different schemes versus per-RRH fronthaul capacity.}\label{fig3}
\end{figure}

Fig. \ref{fig3} shows the max-min SINR of all the users achieved by different schemes versus the common fronthaul link capacity $T$, where the performance is averaged over $100$ channel realizations. Note that in Fig. \ref{fig3}, small and large values of $T$ may correspond to the cases of wireless and wired fronthaul links, respectively. It is observed that when $T\leq 0.7$Gbps such that the fronthaul capacity constraints are active, Algorithm \ref{table1} always achieves the best max-min SINR compared to Benchmark Schemes 1-3. It is also observed that with the new selection criterion (\ref{eqn:SNR1}), the max-min SINR achieved by Benchmark Scheme 1 is much lower than that achieved by Algorithm \ref{table1}. This is because the criterion (\ref{eqn:SNR1}) does not consider the influence of shutting down the wireless link $(k^{(t)},n^{(t)})$ on SINR reduction of user $k^{(t)}$, whose SINR is the bottleneck for problem (P2-1,t+1) in the $(t+1)$-th iteration of Algorithm \ref{table1}. As a result, with the user selection given in (\ref{eqn:SNR1}), $\gamma^{(1,t)}$ reduces rapidly as $t$ increases, which is undesired. Moreover, it is observed that there is a performance gap between the max-min SINRs achieved by Benchmark Scheme 2 and Algorithm \ref{table1} since the former does not consider the fronthaul constraint when it selects each new user association. Last, it is observed that Algorithm \ref{table1} and Benchmark Schemes 1-2 all outperform Benchmark Scheme 3 significantly with sufficiently large fronthaul capacity thanks to the joint signal processing gain of C-RAN.


\section{Conclusion}\label{sec:Conclusion}
In this paper, we studied the downlink communication in C-RAN with multi-antenna RRHs and single-antenna users. By jointly optimizing RRH beamforming and user-RRH association, the minimum SINR of all users in C-RAN is maximized subject to each RRH's individual fronthaul capacity constraint. To reduce the complexity of joint optimization, we proposed a new method that decouples the optimization of beamforming and user association, where the number of users served by each RRH is iteratively reduced until the obtained beamforming and user association solution satisfies all the fronthaul capacity constraints. The convergence of the proposed algorithm was proved. We also showed by simulation that the proposed algorithm outperforms other heuristic solutions. It is worth noting that our proposed new method can be generally applied to other downlink transmission optimization problems in C-RAN or other wireless networks subject to individual fronthaul/backhaul constraints at the RRHs/BSs.

\begin{appendix}
\subsection{Proof of Proposition \ref{proposition1}}\label{appendix1}
Suppose that with the optimal beamforming solution of (P1), the SINRs of the users are denoted by $\mv{\gamma}'=[\gamma_1',\cdots,\gamma_K']^T$. Consider another SINR vector for the users denoted by $\mv{\gamma}''=[\gamma'',\cdots,\gamma'']^T$, where $\gamma''=\min_{1\leq k \leq K} \gamma_k'$. First, the objective value of problem (P1) is not changed with the new users' SINR vector $\mv{\gamma}''$. Next, without the fronthaul constraints given in (\ref{eqn:constraint 3}), there must exist a beamforming solution, denoted by $\{\mv{w}_{k,n}''\}$, such that $\gamma''$ is achievable by all users since $\gamma''\leq \gamma_k'$, $\forall k$. Similarly, $\{\mv{w}_{k,n}''\}$ also satisfies the fronthaul constraints (\ref{eqn:constraint 3}). As a result, given any beamforming solution to problem (P1), we can always find another solution such that all the users achieve the same SINR. Proposition \ref{proposition1} is thus proved.

\subsection{Proof of Proposition \ref{proposition2}}\label{appendix2}
The optimal value of problem (P2,t) must be upper-bounded by the optimal values of its sub-problems (P2-1,t) and (P2-2,t), i.e., $\gamma^{(t)}\leq \min(\gamma^{(1,t)},\gamma^{(2,t)})$. In the following, we show that the equality is always achievable. First, consider the case when $\gamma^{(1,t)}\geq \gamma^{(2,t)}$. Since $\gamma^{(2,t)}$ is no larger than the optimal value of problem (P2-1,t), there must exist one beamforming solution, denoted by $\mv{w}_{k,n}^{(2,t)}$'s, such that the SINR target $\gamma^{(2,t)}$ is simultaneously achieved by all the users over the wireless links. As a result, in this case the optimal value of problem (P2,t) is $\gamma^{(t)}=\gamma^{(2,t)}=\min(\gamma^{(1,t)},\gamma^{(2,t)})$, which is achieved by the beamforming solution $\mv{w}_{k,n}^{(2,t)}$'s. Next, consider the case when $\gamma^{(1,t)}<\gamma^{(2,t)}$. In this case, the fronthaul capacity constraints (\ref{eqn:constraint 7}) are satisfied even if all the users' SINRs are equal to $\gamma^{(1,t)}$. As a result, the optimal value of problem (P2,t) is $\gamma^{(t)}=\gamma^{(1,t)}=\min(\gamma^{(1,t)},\gamma^{(2,t)})$, which is achieved by the beamforming solution $\mv{w}_{k,n}^{(1,t)}$'s. Proposition \ref{proposition2} is thus proved.

\subsection{Proof of Proposition \ref{proposition3}}\label{appendix3}
First, it can be shown that if we shut down one wireless link, the max-min SINR over the wireless links is non-increasing. As a result, it follows that the optimal value of problem (P2-1,t) is non-increasing with $t$. Second, since the number of users served by each RRH is non-increasing with $t$, according to (\ref{eqn:opt 1}) the optimal value of problem (P2-2,t) is non-decreasing with $t$. As a result, the gap between $\gamma^{(1,t)}$ and $\gamma^{(2,t)}$ will be non-increasing as $t$ increases. Moreover, if all the wireless links are shut down at some iteration $\hat{t}$, then it follows that $\gamma^{(1,\hat{t})}<\gamma^{(2,\hat{t})}$. As a result, there must exist one $t^\ast<\hat{t}$ such that $\gamma^{(1,t)}\geq \gamma^{(2,t)}$ when $t<t^\ast$, but $\gamma^{(1,t)}<\gamma^{(2,t)}$ when $t\geq t^\ast$. The first part of Proposition \ref{proposition3} is thus proved.

Next, according to Proposition \ref{proposition1}, before the stopping criterion is satisfied, we have $\gamma^{(t)}=\gamma^{(2,t)}$, which is non-decreasing with $t$. As a result, the second part of Proposition \ref{proposition3} is proved.
\end{appendix}

\end{document}